\documentclass[conference]{IEEEtran}
\IEEEoverridecommandlockouts
\usepackage{cite}
\usepackage[letterpaper, top=0.7in, bottom=1in, left=0.625in, right=0.625in]{geometry}
\usepackage{amsmath,amssymb,amsfonts}
\usepackage{amsthm} 
\newtheorem{proposition}{Proposition}
\usepackage{booktabs}
\usepackage{algorithmic,algorithm}

\newtheorem{lemma}{Lemma}
\usepackage{caption,subcaption}
\usepackage{graphicx}
\usepackage{csquotes}
\usepackage{textcomp}
\usepackage{xcolor}

\def\BibTeX{{\rm B\kern-.05em{\sc i\kern-.025em b}\kern-.08em
    T\kern-.1667em\lower.7ex\hbox{E}\kern-.125emX}}
\begin{document}

\title{Stay or Switch: Online Conformal Bayesian Optimization Guided Fluid Antenna Configuration 
}

\author{\IEEEauthorblockN{
    Gangyong Zhu,
    Jia Yan,
    and Shijian Gao
}
\IEEEauthorblockA{The Hong Kong University of Science and Technology (Guangzhou), Guangzhou, China}
}

\maketitle
\begin{abstract}
Fluid antenna systems (FAS) introduce additional spatial degrees of freedom to enable integrated sensing and communication (ISAC) in air-ground networks. However, conventional studies often overlook or simplify the physical overheads and switching costs of FAS. In practice, port switching incurs non-negligible time, during which communication and sensing may continue but with potentially degraded slot-level performance. This leads to two key challenges: (1) the characterization of a slot-level, cost-aware ISAC metric is difficult, and (2) the large port space and accompanying abrupt environmental variations demand more reliable online decision-making. To address these challenges, a cost-aware multi-objective FAS switching problem is formulated, jointly considering slot-level ISAC performance and switching energy. The online conformal Bayesian optimization (OCBO) algorithm is then proposed to learn the unknown gray-box ISAC objectives and calibrate surrogate uncertainty for robust stay-or-switch decisions. Simulation results demonstrate that the proposed cost-aware optimization framework achieves substantially improved long-term ISAC performance compared to existing baselines.
\end{abstract}

\begin{IEEEkeywords}
Fluid antenna systems, integrated sensing and communication, switching strategy, conformal prediction, Bayesian optimization.
\end{IEEEkeywords}

\section{Introduction}

\IEEEPARstart{F}{uture} 6G networks demand the seamless integration of high-rate communication and high-precision environmental sensing. Fluid antenna systems (FAS), characterized by their ability to move radiation positions within a physical aperture, have emerged as a promising technology for achieving integrated sensing and communication (ISAC) in air-ground systems \cite{GonzalezPrelcic2024ISACRevolution,gao2026iscc_lan,Lu2024ISACRecentAdvances}. By dynamically optimizing antenna positions, FAS can reshape the spatial channel and provide additional spatial degrees of freedom to enhance ISAC integration gain \cite{New2025FASTutorial}. Motivated by this, extensive research efforts have been devoted to port-selection for FAS-ISAC systems \cite{Wu2025FASEnabling6G,Zhang2026FASUNISAC}.

Most existing FAS-ISAC studies optimize the active antenna positions based on the instantaneous performance of each selected configuration, while simplifying the physical switching process as an instantaneous or cost-free switch \cite{Zhu2024MovableAntennaMultiuser,Zou2024FASISACTradeoff}. This idealization may lead to overly greedy port switching, where the system frequently moves toward configurations with higher instantaneous gains but ignores the associated movement delay, and energy consumption. As a result, frequent switching may even compromise the performance gains brought by the spatial flexibility of FAS. To address this issue, recent studies start to consider cost-aware FAS and movable-antenna designs from different perspectives. The author in \cite{liu2026switchingCostFAS} incorporates a switching penalty into the long-term resource-allocation objective, discouraging frequent changes of the active FAS ports and improving the stability of scheduling decisions. Moreover, exhaustive port scanning is shown to become impractical in time-varying channels due to switching latency and measurement offsets in \cite{dinis2026spatiotemporalFAS}. For liquid-metal FAS, movement delay and actuation energy are incorporated into energy-efficiency-oriented port selection, indicating that ignoring these costs can result in overly aggressive movement and degraded performance \cite{zhang2026energyEfficientFAIDET}. The movement-duration tradeoff of the movable-antenna systems further shows that the time spent on antenna movement may improve the subsequent channel condition but reduces the remaining time for data transmission \cite{hu2026fundamentalMATradeoff}. However, these studies mainly model the switching interval as an energy penalty, a measurement delay, or a move-then-transmit overhead. For liquid-metal FAS-ISAC, a more fine-grained question remains: \textit{how should communication and sensing objectives during antenna movement be modeled and exploited for online switching?}

To answer this question, this paper formulates a cost-aware FAS port-switching problem for liquid-metal FAS-ISAC, where the switching process is treated as an unstable but still usable transmission interval. During this interval, communication and sensing signals are not simply discarded; instead, they are accumulated as part of the slot-level cost-aware physical utility together with the dwell-stage performance and mechanical movement energy. Solving this problem faces two key challenges: first, the cost-aware ISAC metric is difficult to characterize, since the slot-level utility jointly depends on switching-stage performance, dwell-stage performance, switching time, and movement energy, while the real-time channel state information (CSI) and target responses required for explicit evaluation are difficult to obtain in dynamic air-ground environments; second, the multi-position FAS configuration space, moving users and targets, and potential abrupt environmental changes may cause model mismatch and overconfident surrogate predictions, requiring more stable online decision-making \cite{yang2025synesthesia}. To address these challenges, we propose an online conformal Bayesian optimization (OCBO) algorithm. Specifically, Bayesian optimization (BO) is employed to learn the unknown mapping from discrete FAS port switching decisions to cost-aware ISAC objectives using limited evaluations, while online conformal calibration adjusts the surrogate uncertainty through a dynamic residual buffer to improve robustness under environmental variations. Finally, the port switching decision is made by maximizing the cost-aware acquisition function that balances the calibrated utility improvement against the switching energy.

\section{System Model}
In this section, we first introduce the network architecture of the considered FAS-ISAC systems, and then establish the communication and sensing system models.

\subsection{Network Architecture}

We consider the FAS-ISAC systems comprising $B$ base stations (BSs), $K$ single-antenna users $\mathcal{K}$, and $L$ targets denoted by $\{ \mathbf{p}_\ell,\alpha_\ell\}_{\ell=1}^{L}$, as illustrated in Fig. \ref{fig:system_model}. Each BS is equipped with a rotatable FAS, where a set of candidate ports $\mathcal{C}_b$ is deployed on a planar aperture. In each time slot, BS $b$ activates $M_b$ transmit ports and $N_b$ receive ports from $\mathcal{C}_b$, ensuring that the activated port sets are disjoint. Specifically, let $\mathcal{M}_{b}^{\text{tx}}$ and $\mathcal{M}_{b}^{\text{rx}}$ denote the local coordinate matrices for the transmit and receive ports, respectively. These coordinates are jointly determined by the selected port indices and the surface orientation defined by rotation angles $(\theta_b, \phi_b)$ \cite{Zheng2025RAComSens}. This geometric configuration is defined as $a_t = \{ \mathcal{M}_{b,t}^{\text{tx}}, \mathcal{M}_{b,t}^{\text{rx}}, \theta_{b,t}, \phi_{b,t} \}_{b\in\mathcal{B}}$, which allows for dynamic alignment of the FAS antenna position at each time slot $t$. Based on the FAS configuration $a_t$, the BSs transmit pilot signals and compute the downlink precoding matrix $\mathbf{W}(a_t)$ for subcarriers $n \in \mathcal{N}$ according to the received feedback, then evaluate the ISAC performance.

The online decision process is structured into discrete time intervals of duration $T_{\mathrm{s}}$, where each slot comprises a switching stage for port switching and a dwell stage for stable communication and sensing. Let $v_{\mathrm{FA}}$ denote the speed of liquid metal. Then, the total switching time $T_{\mathrm{sw}}(a_t | a_{t-1})$ is defined as
\begin{equation}
T_{\mathrm{sw}}(a_t | a_{t-1}) = \max_{b \in \mathcal{B}, m \in \mathcal{M}_b \cup \mathcal{N}_b} \frac{d_{\mathrm{path}}^{(b,m)}(a_{t-1}, a_t)}{v_{\mathrm{FA}}},
\end{equation}
where $d_{\mathrm{path}}^{(b,m)}(a_{t-1}, a_t)$ denotes the distance traveled by the $m$-th port at BS $b$ during the switching interval \cite{Wang2026MAVelocityProfile}.

\subsection{ISAC Model}

The communication and sensing spatial responses are modeled by the field response matrix (FRM) \cite{Chen2025FluidMIMOPortSelection}. For a spatial path with direction $\mathbf{u}(\theta, \phi)$, the transmit steering vector is determined by the selected antenna positions:
\begin{equation}
\mathbf{a}_b(\mathcal{M}_{b,t}^{\text{tx}}, \theta_{b,t}, \phi_{b,t}) = \left[ e^{-j\frac{2\pi}{\lambda}\mathbf{q}_{b,1}^{T}\mathbf{u}}, \dots, e^{-j\frac{2\pi}{\lambda}\mathbf{q}_{b,M_b}^{T}\mathbf{u}} \right]^T,
\end{equation}
where $\left\{\mathbf{q}_{b,m}\right\}^{M_b}_{m=1}$ represents the three-dimensional global positions of the $m$-th transmit port. The receive steering vector $\mathbf{b}_b(\mathcal{M}_{b}^{\mathrm{rx}}, \theta, \phi) \in \mathbb{C}^{N_b^{\mathrm{rx}} \times 1}$ is defined analogously. With $b$, $k$, and $n$ denoting the BS index, user index, and orthogonal frequency division multiplexing (OFDM) subcarrier index, respectively, the communication channel vector $\mathbf h_{b,k,n}$ from BS $b$ to user $k$ over subcarrier $n$ is obtained from the FRM $\mathbf H_{b,k,n}$ as
\begin{equation}
\mathbf h_{b,k,n}
=
\mathbf H_{b,k,n}(\mathcal M_b^{\mathrm{tx}})
\boldsymbol{\sigma}_{b,k,n},
\label{eq:comm_channel_vector}
\end{equation}
where $\mathbf H_{b,k,n}(\mathcal M_b^{\mathrm{tx}})$ denotes the submatrix of $\mathbf H_{b,k,n}$ corresponding to the selected transmit ports.
\begin{figure}[t]
  \centering
  \includegraphics[width=0.95\linewidth]{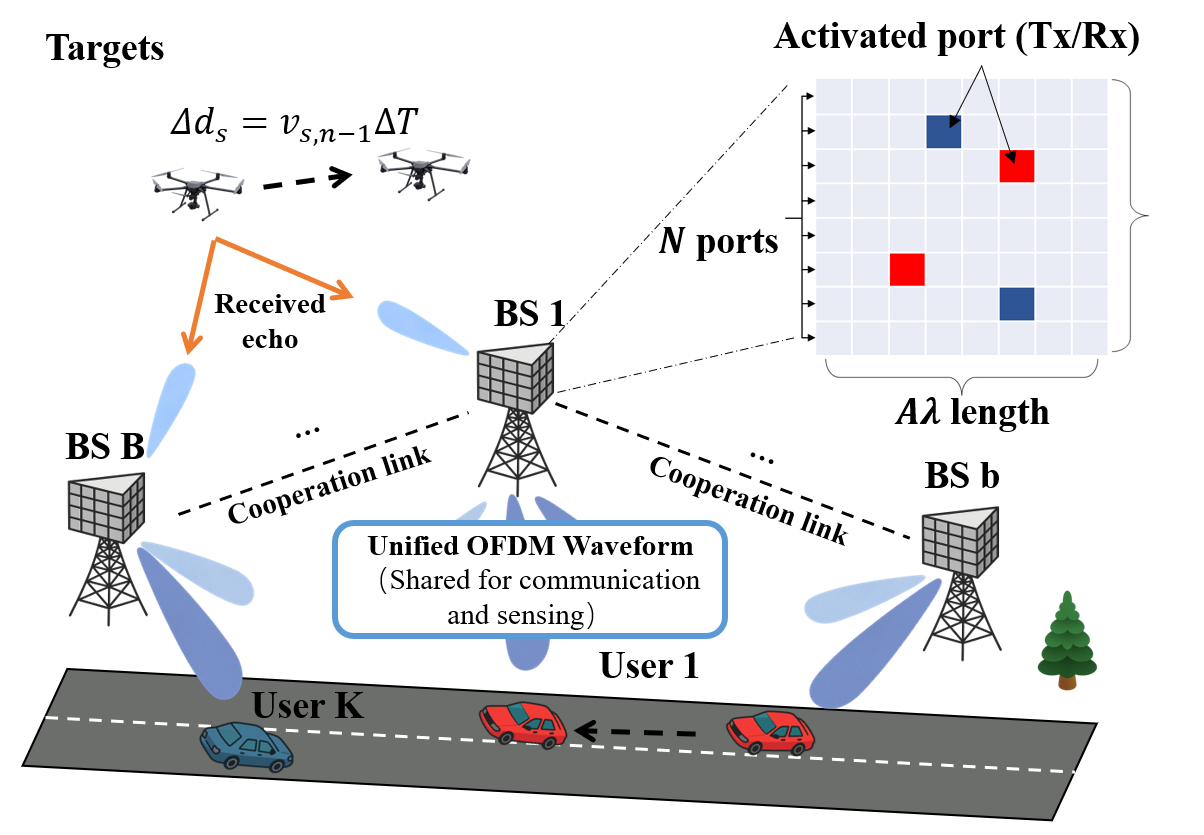}%
  \caption{Structure of the considered FAS-ISAC systems.}
    \label{fig:system_model}
\end{figure}

Adopting the OFDM scheme, the transmitted signal vector $\mathbf{x}_{b,i,n}$ from BS $b$ for the $i$-th symbol and $n$-th subcarrier is given by $\mathbf{x}_{b,i,n} = \sum_{k \in \mathcal{K}} \mathbf{w}_{b,k,n}(a_t) s_{k,i,n}$, where $\mathbf{w}_{b,k,n}(a_t)$ is the precoding vector corresponding to the configuration $a_t$, and $s_{k,i,n}$ is the data symbol. The received signal at user $k$ for the $n$-th subcarrier is expressed as $y_{k,i,n} = \sum_{b \in \mathcal{B}} \mathbf{h}_{b,k,n}^H(a_t) \mathbf{x}_{b,i,n} + z_{k,i,n}$, where $z_{k,i,n}$ denotes the additive white Gaussian noise (AWGN). The instantaneous signal-to-interference-plus-noise ratio (SINR) for user $k$ on subcarrier $n$ is formulated as:
\begin{equation}
\gamma_{k,n}(a_t) = \frac{\left| \sum_{b \in \mathcal{B}} \mathbf{h}_{b,k,n}^H(a_t) \mathbf{w}_{b,k,n}(a_t) \right|^2}{\sum_{k' \neq k} \left| \sum_{b \in \mathcal{B}} \mathbf{h}_{b,k,n}^H(a_t) \mathbf{w}_{b,k',n}(a_t) \right|^2 + \sigma_k^2}.
\end{equation}
Accordingly, the steady-state sum-rate for the dwell stage is defined as 
$R_{\mathrm{d}}(a_t) = \frac{1}{|\mathcal{N}|} \sum_{n \in \mathcal{N}} \sum_{k \in \mathcal{K}} \log_2(1 + \gamma_{k,n}(a_t)).$

We consider a fully cooperative sensing mode, where all BSs in $\mathcal{B}$ simultaneously act as sensing transmitters and receivers. For a given FAS configuration $a_t$, the sensing observation collected at each BS $r \in \mathcal{B}$ is expressed as:
\begin{equation}
\mathbf{y}_{r} = \mathbf{A}_{r}(a_t )\mathbf{g}_{\mathrm{s}} 
+ \mathbf{H}_{\mathrm{SI},r}(a_t)\mathbf{x}_r 
+ \sum_{b \in \mathcal{B} \setminus \{r\}} \mathbf{H}_{\mathrm{CI},r,b}(a_t)\mathbf{x}_b 
+ \mathbf{n}_{r},
\label{eq:rx_sensing_model}
\end{equation}
where $\mathbf{g}_{\mathrm{s}} \in \mathbb{C}^{L \times 1}$ denotes the target response vector for the $L$ targets, and $\mathbf{A}_{r}(a_t )$ is the sensing dictionary at receiver $r$, which incorporates the aggregate echoes from all transmitting BSs. The terms $\mathbf{H}_{\mathrm{SI},r}(a_t)$ and $\mathbf{H}_{\mathrm{CI},r,b}(a_t)$ represent the residual self-interference (SI) channel at BS $r$ and the cross-interference (CI) channel from BS $b$ to BS $r$, respectively. By stacking the observations from all BSs, the joint sensing model is given by:
\begin{equation}
\mathbf{y}_{\mathrm{s}} = \mathbf{A}_{\mathrm{s}}(a_t )\mathbf{g}_{\mathrm{s}} + \mathbf{z}_{\mathrm{s}}(a_t) + \mathbf{n}_{\mathrm{s}},
\label{eq:joint_sensing_model_si}
\end{equation}
where $\mathbf{A}_{\mathrm{s}}(a_t ) = [\mathbf{A}_{1}^{T}, \dots, \mathbf{A}_{B}^{T}]^{T}$ is the global sensing dictionary, $\mathbf{z}_{\mathrm{s}}(a_t)$ denotes the aggregate residual SI and CI, and $\mathbf n_{\mathrm s}$ denotes the stacked sensing noise. After interference suppression, the effective disturbance is modeled by the covariance matrix $\mathbf{R}_{\mathrm{e}}(a_t) = \sigma_{\mathrm{s}}^2\mathbf{I} + \mathbf{R}_{\mathrm{inf}}(a_t)$. Assuming $\mathbf{g}_{\mathrm{s}} \sim \mathcal{CN}(\mathbf{0}, \mathbf{R}_{\mathrm{g}})$, the configuration-level sensing mutual information (MI) is defined as:
\begin{equation}
\mathcal{I}_{\mathrm{s}}(a_t) = \log_2\det \left( \mathbf{I} + \mathbf{R}_{\mathrm{e}}^{-1}(a_t) \mathbf{A}_{\mathrm{s}}(a_t ) \mathbf{R}_{\mathrm{g}} \mathbf{A}_{\mathrm{s}}^{H}(a_t ) \right).
\label{eq:si_aware_multistatic_mi}
\end{equation}

\section{Cost-aware FAS Switching via OCBO Method}
This section develops the cost-aware FAS-ISAC switching framework. First, we derive cost-aware performance metrics that characterize the slot-level communication and sensing objectives under both port movement and stable dwell stages. Then, we formulate the resulting stay-or-switch decision as a multi-objective online optimization problem.  Finally, we propose the OCBO algorithm to enable robust FAS reconfiguration.

\begin{figure}[t]
    \centering
    \includegraphics[width=0.90\linewidth]{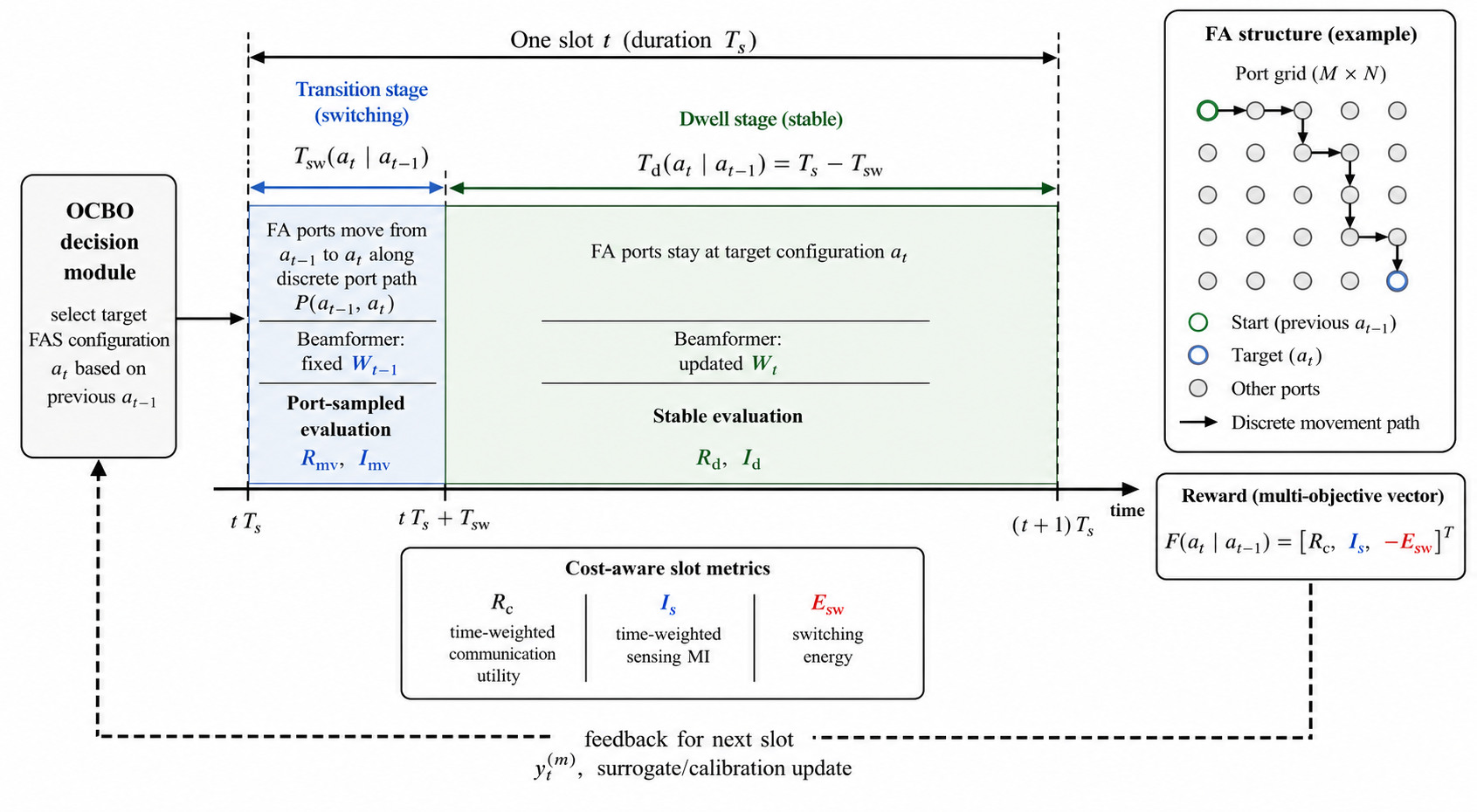}
    \caption{Cost-aware slot structure for FAS-ISAC port switching}
    \label{fig:slot_structure}
\end{figure}

\subsection{Cost-Aware Performance Metrics}
The movement of FAS ports introduces a time-varying system geometry during the switching interval, making the slot-level performance dependent on the switching trajectory between the preceding configuration $a_{t-1}$ and the target configuration $a_t$. Here, the slot-level communication utility is defined as the weighted average of the rates achieved in each stage:
\begin{equation}
R_{\mathrm{c}}(a_t|a_{t-1}) = \frac{T_{\mathrm{sw}}}{T_{\mathrm{s}}} R_{\mathrm{mv}}(a_t|a_{t-1}) + \frac{T_{\mathrm{d}}}{T_{\mathrm{s}}} R_{\mathrm{d}}(a_t, \mathbf{W}(a_t)),
\label{eq:slot_comm_performance}
\end{equation}
where $R_{\mathrm{d}}(a_t, \mathbf{W}(a_t))$ is the steady-state rate at the target configuration. $T_{\mathrm{d}} = T_{\mathrm{s}} - T_{\mathrm{sw}}(a_t | a_{t-1})$ represents the subsequent dwell stage. $R_{\mathrm{mv}}$ is characterized as follows.

\begin{proposition}\label{prop:R_mv_approx}
In the high-port-density regime, the switching-stage rate can be approximated by the average of the instantaneous rates evaluated at the available ports along the movement path from $a_{t-1}$ to $a_t$, 
\begin{equation}
R_{\mathrm{mv}}(a_t|a_{t-1})
\approx
\frac{1}{|\mathcal P|}
\sum_{a_p\in\mathcal P}
R_{\mathrm c}^{\mathrm{ins}}(a_p,\mathbf W_{t-1}),
\label{eq:Rmv_port_sampled}
\end{equation}
where $\mathcal P$ denotes the set of available ports along the movement path, and $R_{\mathrm c}^{\mathrm{ins}}(a_n,\mathbf W_{t-1})$ denotes the instantaneous sum-rate evaluated at port configuration $a_n$ using $\mathbf W_{t-1}$, i.e.,
$R_{\mathrm c}^{\mathrm{ins}}(a_n,\mathbf W_{t-1})=\frac{1}{|\mathcal N|}\sum_{n\in\mathcal N}\sum_{k\in\mathcal K}\log_2(1+\gamma_{k,n}(a_n,\mathbf W_{t-1}))$.
\end{proposition}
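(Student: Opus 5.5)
The approach is to define the switching-stage rate as a time average of the instantaneous rate along the continuous movement trajectory, and then discretize that integral with the available ports acting as quadrature nodes. Concretely, let $a(\tau)$, $\tau\in[0,T_{\mathrm{sw}}]$, be the instantaneous geometric configuration while the liquid metal moves from $a_{t-1}$ to $a_t$. During this interval the precoder cannot be recomputed, since no fresh feedback is available for the intermediate geometry, so the BSs keep using $\mathbf W_{t-1}$. The natural definition is therefore
\begin{equation*}
R_{\mathrm{mv}}(a_t|a_{t-1})=\frac{1}{T_{\mathrm{sw}}}\int_0^{T_{\mathrm{sw}}} R_{\mathrm c}^{\mathrm{ins}}\big(a(\tau),\mathbf W_{t-1}\big)\,d\tau .
\end{equation*}
This is consistent with the weighting by $T_{\mathrm{sw}}/T_{\mathrm{s}}$ in \eqref{eq:slot_comm_performance}. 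The first step is simply to state this modeling assumption, together with constant-speed motion at $v_{\mathrm{FA}}$. Under constant speed, the time parametrization is proportional to arc length along each port's path. For the port attaining the maximum in the definition of $T_{\mathrm{sw}}$ this is exact; the other ports arrive earlier and then remain at their target positions.

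The second step is regularity. The rate $R_{\mathrm c}^{\mathrm{ins}}(\cdot,\mathbf W_{t-1})$ is a continuous function of the port coordinates. Each steering entry $e^{-j\frac{2\pi}{\lambda}\mathbf q^T\mathbf u}$ is smooth in $\mathbf q$, so the channels $\mathbf h_{b,k,n}$ obtained from the FRM are smooth in position. The SINR $\gamma_{k,n}$ is a ratio of such smooth quantities with denominator bounded below by $\sigma_k^2>0$, and $\log_2(1+\cdot)$ is Lipschitz on $[0,\infty)$. Hence $R_{\mathrm c}^{\mathrm{ins}}$ is Lipschitz along the path, with a constant $L_R$ of order $2\pi\|\mathbf u\|/\lambda$ times bounds on the channel and precoder norms.

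The third step is the discretization. Partition the path into $|\mathcal P|$ segments, each associated with one available port $a_p$ it traverses. In the high-port-density regime the port spacing $\Delta$ is small, so the segments have approximately equal duration $\Delta/v_{\mathrm{FA}}$. The integral is then a Riemann sum with nodes $a_p$ and weights $1/|\mathcal P|$. By the Lipschitz bound, the error is at most of order $L_R\Delta$ plus the error from unequal segment lengths. Both vanish as $\Delta\to0$, which yields \eqref{eq:Rmv_port_sampled}.

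The main obstacle is the multi-port, multi-BS case, where different ports travel different distances and the intermediate configuration $a(\tau)$ is a joint state rather than a single port sequence. I would handle this by indexing $\mathcal P$ by the time samples of the slowest port and letting the other ports sit at their nearest available positions, which are either on their own path or at their destination. The additional approximation error this introduces is again $O(L_R\Delta)$. A secondary difficulty is that $L_R$ may be large at fine scales, of order $1/\lambda$. So the approximation genuinely requires $\Delta\ll\lambda$, and I would state it in that sense rather than claim it as an exact identity.
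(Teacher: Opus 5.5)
Your proposal is correct and follows the paper's skeleton. $R_{\mathrm{mv}}$ is defined as the time average of $R_{\mathrm c}^{\mathrm{ins}}(a(\tau),\mathbf W_{t-1})$ over $[0,T_{\mathrm{sw}}]$ with the precoder frozen. The rate on each local segment is then replaced by its value at the corresponding port. Uniform spacing with constant speed gives the equal weights $1/|\mathcal P|$.

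The difference is how the local replacement is justified. The paper uses the spatial correlation $\rho_h(d)=J_0(2\pi d/\lambda)=1-\pi^2d^2/\lambda^2+O(d^4/\lambda^4)$ and argues that correlation near one lets a port value represent the rate locally. This is short and ties into standard FAS modeling. However, it is an ensemble statement for rich isotropic scattering, not a property of the finite-path FRM channel in the system model. It also constrains the rate only indirectly, through mean-square closeness of channel coefficients.

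You instead argue deterministically from the FRM steering vectors. You carry smoothness through the SINR using the noise floor $\sigma_k^2$ and through the Lipschitz map $\log_2(1+\cdot)$. This gives a per-realization error of order $L_R\Delta$ with $L_R\propto 1/\lambda$, which makes the requirement $\Delta\ll\lambda$ quantitative. You also handle the multi-port joint state and the unequal arrival times implied by the $\max$ in $T_{\mathrm{sw}}$ and the sum in $E_{\mathrm{sw}}$, which the paper's proof skips.

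One caveat on constants: $L_R$ also inherits a factor of order $1/\sigma_k$. Near a deep fade of $\mathbf h^H\mathbf w$, the slope of $\log_2(1+\gamma)$ is of order $(2\pi/\lambda)\sqrt{\mathrm{SNR}}$. Your bound is therefore conservative at high transmit power, but this does not affect the conclusion as $\Delta\to 0$.
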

\begin{proof}
Let $a(\tau)$ denote the continuous FAS configuration during the switching interval $\tau\in[0,T_{\mathrm{sw}}]$. During the switching, the beamformer is fixed as $\mathbf W_{t-1}=\mathbf W(a_{t-1})$, and the switching-stage rate is
\begin{equation}
R_{\mathrm{mv}}(a_t|a_{t-1})
=
\frac{1}{T_{\mathrm{sw}}}
\int_0^{T_{\mathrm{sw}}}
R_{\mathrm c}^{\mathrm{ins}}(a(\tau),\mathbf W_{t-1})d\tau .
\label{eq:Rmv_continuous}
\end{equation}

For FAS channels, the spatial correlation between two nearby positions separated by distance $d$ can be characterized by $\rho_h(d)=J_0(2\pi d/\lambda)$, where $J_0(\cdot)$ is the zero-order Bessel function. In the high-port-density regime, $d/\lambda$ is small, and using $J_0(x)=1-x^2/4+O(x^4)$ gives $\rho_h(d)=1-\pi^2 d^2/\lambda^2+O(d^4/\lambda^4)$. $\rho_h(d)$ approaches one as $d/\lambda \rightarrow 0$, indicating strong channel correlation within each local port neighborhood. Therefore, the continuous rate around each local segment can be represented by the rate evaluated at the corresponding port. Accordingly, the integral in \eqref{eq:Rmv_continuous} can be approximated by a weighted sum over the ports along the movement path,
\begin{equation}
R_{\mathrm{mv}}(a_t|a_{t-1})
\approx
\sum_{a_n\in\mathcal P(a_{t-1},a_t)}
\omega_n R_{\mathrm c}^{\mathrm{ins}}(a_n,\mathbf W_{t-1}).
\end{equation}
With uniformly spaced ports and constant movement speed, the local weights are approximately identical, i.e., $\omega_n\approx 1/|\mathcal P(a_{t-1},a_t)|$. Thus,
\begin{equation}
R_{\mathrm{mv}}(a_t|a_{t-1})
\approx
\frac{1}{|\mathcal P(a_{t-1},a_t)|}
\sum_{a_n\in\mathcal P}
R_{\mathrm c}^{\mathrm{ins}}(a_n,\mathbf W_{t-1}).
\end{equation}
By writing $R_{\mathrm c}^{\mathrm{ins}}(a_n)=R_{\mathrm c}^{\mathrm{ins}}(a_n,\mathbf W_{t-1})$ during the switching-stage, \eqref{eq:Rmv_port_sampled} follows.
\end{proof}

\begin{lemma}\label{lem:I_mv_approx}
In the high-port-density regime, the switching-stage sensing MI can be approximated by the average of the configuration-level sensing MI evaluated at the available ports along the movement path:
\begin{equation}
I_{\mathrm{mv}}(a_t|a_{t-1})
\approx
\frac{1}{|\mathcal P(a_{t-1},a_t)|}
\sum_{a_n\in\mathcal P(a_{t-1},a_t)}
\mathcal I_{\mathrm{s}}(a_n,\mathbf W_{t-1}),
\label{eq:Imv_port_sampled}
\end{equation}
where $\mathcal P(a_{t-1},a_t)$ denotes the set of available ports along the movement path, and $\mathcal I_{\mathrm{s}}(a_n,\mathbf W_{t-1})$ denotes the instantaneous sensing MI evaluated at port configuration $a_n$ using $\mathbf W_{t-1}$.
\end{lemma}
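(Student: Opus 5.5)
The plan is to reuse the argument behind Proposition~\ref{prop:R_mv_approx}, replacing the instantaneous sum-rate with the configuration-level sensing MI in \eqref{eq:si_aware_multistatic_mi}. Let $a(\tau)$, $\tau\in[0,T_{\mathrm{sw}}]$, be the continuous FAS configuration during the switching stage. The transmit signals $\mathbf x_b$, and hence the interference covariance $\mathbf R_{\mathrm{inf}}$ and the effective sensing dictionary, are generated with the frozen precoder $\mathbf W_{t-1}$. The switching-stage sensing MI is therefore defined as the time average
\begin{equation*}
I_{\mathrm{mv}}(a_t|a_{t-1})=\frac{1}{T_{\mathrm{sw}}}\int_0^{T_{\mathrm{sw}}}\mathcal I_{\mathrm s}(a(\tau),\mathbf W_{t-1})\,d\tau ,
\end{equation*}
which is the sensing counterpart of \eqref{eq:Rmv_continuous}.

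The key step is to show that $\mathcal I_{\mathrm s}(a,\mathbf W_{t-1})$ varies little within a local port neighborhood. The argument has two parts.

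\emph{Closeness of the matrices.} The entries of $\mathbf A_{\mathrm s}(a)$ are products of transmit and receive steering phases $e^{-j\frac{2\pi}{\lambda}\mathbf q^T\mathbf u}$. Moving a port by distance $d$ changes each such entry by $O(d/\lambda)$. The channels entering $\mathbf R_{\mathrm{inf}}(a)$ are spatially correlated with $\rho_h(d)=J_0(2\pi d/\lambda)=1-\pi^2d^2/\lambda^2+O(d^4/\lambda^4)$, as in the proof of Proposition~\ref{prop:R_mv_approx}. Together these give $\|\mathbf A_{\mathrm s}(a')-\mathbf A_{\mathrm s}(a)\|=O(d/\lambda)$ and $\|\mathbf R_{\mathrm e}(a')-\mathbf R_{\mathrm e}(a)\|=O(d/\lambda)$.

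\emph{Continuity of the MI.} Write $\mathcal I_{\mathrm s}=\log_2\det(\mathbf R_{\mathrm e}+\mathbf A_{\mathrm s}\mathbf R_{\mathrm g}\mathbf A_{\mathrm s}^H)-\log_2\det\mathbf R_{\mathrm e}$. Because $\mathbf R_{\mathrm e}\succeq\sigma_{\mathrm s}^2\mathbf I$, both arguments stay uniformly positive definite. The differential $\delta\log\det\mathbf X=\mathrm{tr}(\mathbf X^{-1}\delta\mathbf X)$ then gives a Lipschitz bound
\begin{equation*}
|\mathcal I_{\mathrm s}(a')-\mathcal I_{\mathrm s}(a)|\le C\,d/\lambda ,
\end{equation*}
where $C$ depends only on $\sigma_{\mathrm s}^{-2}$, $\|\mathbf R_{\mathrm g}\|$, and bounds on the array and channel norms.

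I expect this continuity step to be the main obstacle. Unlike the SINR, the MI involves $\mathbf R_{\mathrm e}^{-1}$ and a determinant, so the perturbation argument needs the noise floor $\sigma_{\mathrm s}^2$ to keep $\|\mathbf R_{\mathrm e}^{-1}\|$ bounded. I would handle it through the log-det decomposition above rather than by working with $\mathbf R_{\mathrm e}^{-1}$ directly.

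With local continuity in hand, I partition the path into segments around the available ports $a_n\in\mathcal P(a_{t-1},a_t)$, each of length comparable to the port spacing. On each segment, $\mathcal I_{\mathrm s}(a(\tau))$ is replaced by $\mathcal I_{\mathrm s}(a_n)$ at an error of $O(d/\lambda)$, which vanishes in the high-port-density regime. The integral thus becomes $\sum_n\omega_n\mathcal I_{\mathrm s}(a_n,\mathbf W_{t-1})$, with $\omega_n$ the fraction of $T_{\mathrm{sw}}$ spent in segment $n$. Under uniformly spaced ports and constant liquid-metal speed $v_{\mathrm{FA}}$, the weights satisfy $\omega_n\approx 1/|\mathcal P(a_{t-1},a_t)|$, and \eqref{eq:Imv_port_sampled} follows.
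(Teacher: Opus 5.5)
Your proposal is correct and follows essentially the route the paper relies on: the paper gives no separate proof of Lemma~\ref{lem:I_mv_approx} and implicitly reuses the argument of Proposition~\ref{prop:R_mv_approx}. That argument consists of the time-averaged integral over $[0,T_{\mathrm{sw}}]$ with frozen $\mathbf W_{t-1}$, the near-unity $J_0$ correlation at sub-wavelength spacing, segment-wise replacement by the port value, and uniform weights $1/|\mathcal P(a_{t-1},a_t)|$ under constant speed. Your log-det Lipschitz bound, which uses $\mathbf R_{\mathrm e}\succeq\sigma_{\mathrm s}^2\mathbf I$ to keep the inverse controlled, makes quantitative the \emph{local continuity} step that the paper only asserts heuristically from the correlation expansion.
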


Similarly, the slot-level sensing utility is modeled as
\begin{equation}
I_{\mathrm{s}}(a_t|a_{t-1}) =
\frac{T_{\mathrm{sw}}}{T_{\mathrm{s}}} I_{\mathrm{mv}}(a_t|a_{t-1})
+
\frac{T_{\mathrm{d}}}{T_{\mathrm{s}}} I_{\mathrm{d}}(a_t,\mathbf W(a_t)),
\label{eq:slot_sensing_performance}
\end{equation}
where $I_{\mathrm{d}}(a_t,\mathbf W(a_t))=\mathcal I_{\mathrm{s}}(a_t,\mathbf W(a_t))$.
According to Lemma~\ref{lem:I_mv_approx}, the switching-stage sensing MI can be evaluated by the port-sampled approximation in \eqref{eq:Imv_port_sampled}.

The switching energy is computed according to the total movement time of all activated FA elements:
\begin{equation}
E_{\mathrm{sw}}(a_t|a_{t-1})
=
\sum_{b,m}
P_{\mathrm{mv}}
\frac{d_{\mathrm{path}}^{(b,m)}(a_{t-1},a_t)}{v_{\mathrm{FA}}},
\label{eq:switching_energy}
\end{equation}
where $P_{\mathrm{mv}}$ denotes the movement power.

\subsection{Problem Formulation}
Based on the above discussion, the reward vector capturing the multi-objective trade-off is defined as:
\begin{equation}
\mathbf{F}(a_t|a_{t-1}) =
\left[
R_{\mathrm{c}}(a_t|a_{t-1}),
I_{\mathrm{s}}(a_t|a_{t-1}),
-E_{\mathrm{sw}}(a_t|a_{t-1})
\right]^T,
\label{eq:slot_reward_vector}
\end{equation}
where the negative sign indicates that the switching energy is minimized while communication and sensing objectives are maximized. Then, the optimization problem is formulated as:
\begin{subequations} \label{eq:opt_problem}
\begin{align}
 \text{(P1)} \quad &
\max_{\{a_t\}_{t=1}^T} \quad 
\sum_{t=1}^T \mathbf{F}(a_t | a_{t-1}) 
\notag \\ 
\text{s.t.} \quad 
& \mathcal{M}_{b,t}^{\mathrm{tx}} \subseteq \mathcal{C}_b, \quad 
  \mathcal{M}_{b,t}^{\mathrm{rx}} \subseteq \mathcal{C}_b, 
  \quad \forall b \in \mathcal{B}, 
  \label{eq:port_selection} \\
& |\mathcal{M}_{b,t}^{\mathrm{tx}}| = M_b, \quad 
  |\mathcal{M}_{b,t}^{\mathrm{rx}}| = N_b, 
  \quad \forall b \in \mathcal{B}, 
  \label{eq:port_cardinality} \\
& \mathcal{M}_{b,t}^{\mathrm{tx}} \cap 
  \mathcal{M}_{b,t}^{\mathrm{rx}} = \emptyset, 
  \quad \forall b \in \mathcal{B}, 
  \label{eq:exclusivity} \\
& T_{\mathrm{sw}}(a_t | a_{t-1}) \leq T_{\mathrm{s}}, 
  \quad \forall t \in \mathcal{T}, 
  \label{eq:switching_time} \\
& \mathbf{W}(a_t) = \mathcal{B}(\widehat{\mathbf{H}}(a_t)), 
  \quad \forall t \in \mathcal{T}, 
  \label{eq:bf_policy} \\
& \operatorname{Tr} \left( 
  \mathbf{W}_{b,n}(a_t) \mathbf{W}_{b,n}^H(a_t) 
  \right) \leq P_{b,\max}, 
  \quad \forall b, n, 
  \label{eq:power_budget}
\end{align}
\end{subequations}
where \eqref{eq:port_selection}--\eqref{eq:exclusivity} define the feasible port selection and exclusivity constraints, \eqref{eq:switching_time} ensures the reconfiguration completes within one slot, and \eqref{eq:bf_policy}--\eqref{eq:power_budget} specify the precoding policy and power constraints. Since $\mathbf{F}(a_t|a_{t-1})$ is a vector-valued reward, the maximization in \text{(P1)} is pursued in the Pareto sense. Here, $\mathcal{C}_b$ is the candidate port set on its FAS. The constants $M_b$ and $N_b$ represent the numbers of activated transmit and receive ports at BS $b$, respectively, and $\mathcal{T}$ denotes the set of time slots. The precoding matrix $\mathbf W(a_t)$ is obtained from the precoding rule $\mathcal{B}(\hat{\mathbf H}(a_t))$ based on the pilot feedback, and $P_{b,\max}$ denotes the per-subcarrier transmit power budget of BS $b$. 

The ISAC objectives depend on unknown instantaneous channels, target responses, and residual interference, which can be represented as implicit mappings $R_{\mathrm c}(a_t|a_{t-1})=\Phi_{\mathrm c}(a_{t-1},a_t,\mathbf H_t)$ and $I_{\mathrm s}(a_t|a_{t-1})=\Phi_{\mathrm s}(a_{t-1},a_t,\mathbf A_{\mathrm{s},t},\mathbf R_{\mathrm{SI},t})$. Therefore, part of the objectives in \text{(P1)} have a \textit{gray-box} structure, which cannot be calculated in closed form by the white box methods before the configuration is evaluated.

\subsection{Cost-aware GP Surrogates}
The whole slot structure is shown in Fig.~\ref{fig:slot_structure}. In the proposed FAS switching process, each action corresponds to a port switching from the previous FAS configuration $a_{t-1}$ to the target configuration $a_t$. After executing this switching, the system observes noisy feedback from the realized ISAC objectives:
\begin{equation}
y_t^{(m)} = F_m(a_t|a_{t-1}) + \xi_t^{(m)}, \quad m \in \{1,2\},
\label{eq:bandit_feedback}
\end{equation}
where $m=1$ and $m=2$ correspond to the observations of $R_{\mathrm c}$ and $I_{\mathrm s}$, respectively, and $\xi_t^{(m)}$ denotes measurement noise. The feedback is then used by the proposed OCBO framework to update the surrogate models and guide the online search. Specifically, two independent Gaussian Process (GP) surrogates are deployed to learn the two cost-aware gray-box physical objectives ($m \in \{1,2\}$). Since the slot-level feedback depends on the FAS port switching from $a_{t-1}$ to $a$, the surrogate input is defined over the switching pair $(a_{t-1}, a)$ rather than the target port alone. The surrogate can be modeled as:
\begin{equation}
F_m(a | a_{t-1}) \sim \mathcal{GP}\left(\mu_m^{\mathrm{g}}(a | a_{t-1}, \mathbf{b}_t), k_m((a, a_{t-1}), \cdot)\right).
\end{equation}

Given the historical evaluation dataset $\mathcal{D}_t^{(m)} = \{((a_\tau, a_{\tau-1}, s_\tau), y_\tau^{(m)})\}_{\tau=1}^t$, the GP predictive distribution for a candidate switching is \cite{Daulton2022MORBO}:
\begin{equation}
\label{eq:surrogate_posterior}
p(y^{(m)} | a, a_{t-1}, \mathcal{D}_t^{(m)}) = \mathcal{N}\left( \mu_m, \sigma_m^2 \right).
\end{equation}

In the proposed cost-aware FAS switching architecture, the feedback associated with a target configuration $a_t$ is not solely determined by its dwell performance, but also by the preceding configuration $a_{t-1}$. Consequently, the observed utility may exhibit action-dependent fluctuations and temporal drift that are not fully captured by the standard GP posterior variance. This can make the GP surrogate overconfident in dynamically changing regions, thereby misleading the acquisition function. To mitigate this issue, we introduce conformal calibration to adjust the predictive uncertainty using residual feedback, enabling more reliable stay-or-switch decisions.

\begin{algorithm}[t]
\caption{The OCBO-Based FAS Port Switching Algorithm}
\label{alg:ocbo_fas}
\begin{algorithmic}[1]
\STATE \textbf{Input:} Feasible FAS configuration space $\mathcal A$; initial FAS configuration $a_0$; miscoverage level $\alpha$; energy penalty weight $\eta_{\rm e}$; time horizon $T$.
\STATE \textbf{Initialization:} Fit cost-aware GP surrogates using initial evaluations $\mathcal{D}_0$.
\FOR{each time slot $t = 1, 2, \dots, T$}
    \STATE Observe the previous configuration $a_{t-1}$ and construct the feasible candidate set $\mathcal{C}_t\subseteq\mathcal{A}$ according to the constraints in \eqref{eq:port_selection}--\eqref{eq:switching_time}.
    \STATE Compute the OCBO acquisition score $Q_t(a)$ for each feasible candidate $a\in\mathcal{C}_t$ according to \eqref{eq:ocbo_acquisition}.
    \STATE Select target configuration by maximizing \eqref{eq:ocbo_acquisition}.
    \STATE Deploy $a_t$ and observe $\mathbf{y}_t=[y_t^{(1)},y_t^{(2)}]^T$ according to \eqref{eq:bandit_feedback}.
    \FOR{$m \in \{1, 2\}$}
        \STATE Update dataset $\mathcal{D}_t^{(m)} \leftarrow \mathcal{D}_{t-1}^{(m)} \cup \{(\phi_t(a_t), y_t^{(m)})\}$.
        \STATE Fit the GP surrogate using $\mathcal{D}_t^{(m)}$ and obtain the posterior in \eqref{eq:surrogate_posterior}.
        \STATE Update $\mathcal{S}_{t,m}$ using residual $s_{t,m}$ in \eqref{eq:residual_score}.
        \STATE Update $q_{t,m}$ according to \eqref{eq:calibration_factor}.
    \ENDFOR
\ENDFOR
\STATE \textbf{Output:} Deployed configuration sequence $\{a_t\}_{t=1}^{T}$ and collected observations.
\end{algorithmic}
\end{algorithm}


\subsection{Online Conformal Calibration}
To robustify the uncertainty estimates against model misspecification caused by FAS port switching, beamformer mismatch, and environmental drift, we apply online conformal calibration \cite{Stanton2023ConformalBO}. After observing the noisy feedback $y_t^{(m)}$ and computing the GP prediction mean $\mu_m(a_t|a_{t-1})$ and standard deviation $\sigma_m(a_t|a_{t-1})$, we define the residual score:
\begin{equation}
\label{eq:residual_score}
s_{t,m} = \frac{|y_t^{(m)} - \mu_m(a_t|a_{t-1})|}{\sigma_m(a_t|a_{t-1}) + \epsilon},
\end{equation}
where $\epsilon$ is a small positive constant. We maintain a recent calibration buffer $\mathcal{S}_{t,m} = \{s_{t-W,m}, \dots, s_{t-1,m}\}$. Based on this online buffer, we compute the conformal calibration factor as the empirical quantile for a target miscoverage risk $\alpha$ as: 
\begin{equation}
\label{eq:calibration_factor}
q_{t,m} = \mathrm{Quantile}_{1-\alpha}(\mathcal{S}_{t,m}),
\end{equation}
where $\mathrm{Quantile}_{1-\alpha}(\mathcal{S}_{t,m})$ returns the value below which approximately a fraction $1-\alpha$ of the residual scores in $\mathcal{S}_{t,m}$ fall. The GP predictive uncertainty is then robustly scaled as
$\tilde{\sigma}_{t,m}(a)=q_{t,m}\sigma_m(a|a_{t-1})$.

We adopt the expected hypervolume improvement (EHI) as the acquisition function to guide the multi-objective FAS switching search \cite{Daulton2021qNEHVI}. Let $\mathcal{P}_t$ denote the current approximated Pareto set and $\mathbf r$ be the reference point. For a predicted objective sample $\tilde{\mathbf F}(a|a_{t-1})$, the hypervolume improvement is $\mathrm{HVI}_t(a)=\mathrm{HV}\!\left(\mathcal{P}_t\cup\{\tilde{\mathbf F}(a|a_{t-1})\};\mathbf r\right)-\mathrm{HV}\!\left(\mathcal{P}_t;\mathbf r\right)$, where $\mathrm{HV}(\cdot;\mathbf r)$ denotes the hypervolume dominated by a Pareto set and bounded by $\mathbf r$. The EHI score is then given by $
\mathrm{EHI}_t(a|a_{t-1})
=
\mathbb{E}_{\tilde{\mathbf F}}
\left[
\mathrm{HVI}_t(a)
\right]$, where $\tilde{\mathbf F}$ is sampled from the conformal-calibrated predictive distribution. Based on the calibrated EHI score, the deterministic FAS port movement energy is further introduced as an explicit penalty, yielding the following cost-aware acquisition function:
\begin{equation}
\label{eq:ocbo_acquisition}
\mathcal{Q}_{t}(a)
=
\mathrm{EHI}_t(a|a_{t-1})
-
\eta_{\mathrm e}E_{\mathrm{mv}}(a|a_{t-1}),
\end{equation}
where $\eta_{\mathrm e}$ controls the trade-off between the expected physical performance improvement and the switching energy cost. The next FAS configuration is selected by maximizing $\mathcal Q_t(a)$. For the stay action ($a = a_{t-1}$), its cost-aware score is simply $\mathrm{EHI}_t(a_{t-1}|a_{t-1})$. By combining the known analytical energy cost with the conformal-calibrated physical uncertainty, the framework actively avoids risky or overly frequent FAS port switchings. The procedure is summarized in Algorithm \ref{alg:ocbo_fas}.

\begin{figure}[t]
    \centering
    \includegraphics[width=0.75\linewidth]{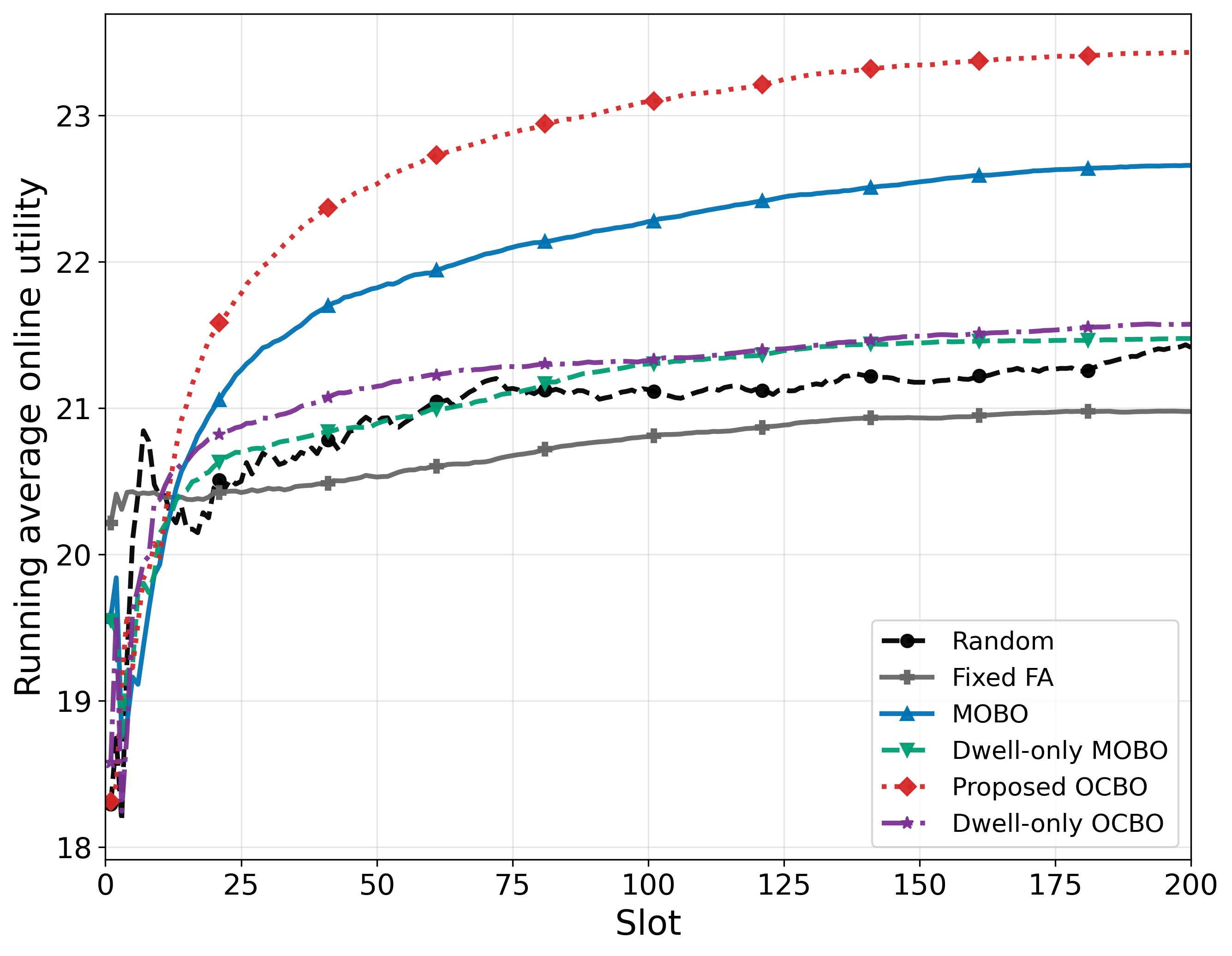}
    \caption{Running average online cost-aware utility over $200$ time slots.}
    \label{fig:running_utility}
\end{figure}

\begin{figure}[t]
    \centering
    \includegraphics[width=0.75\linewidth]{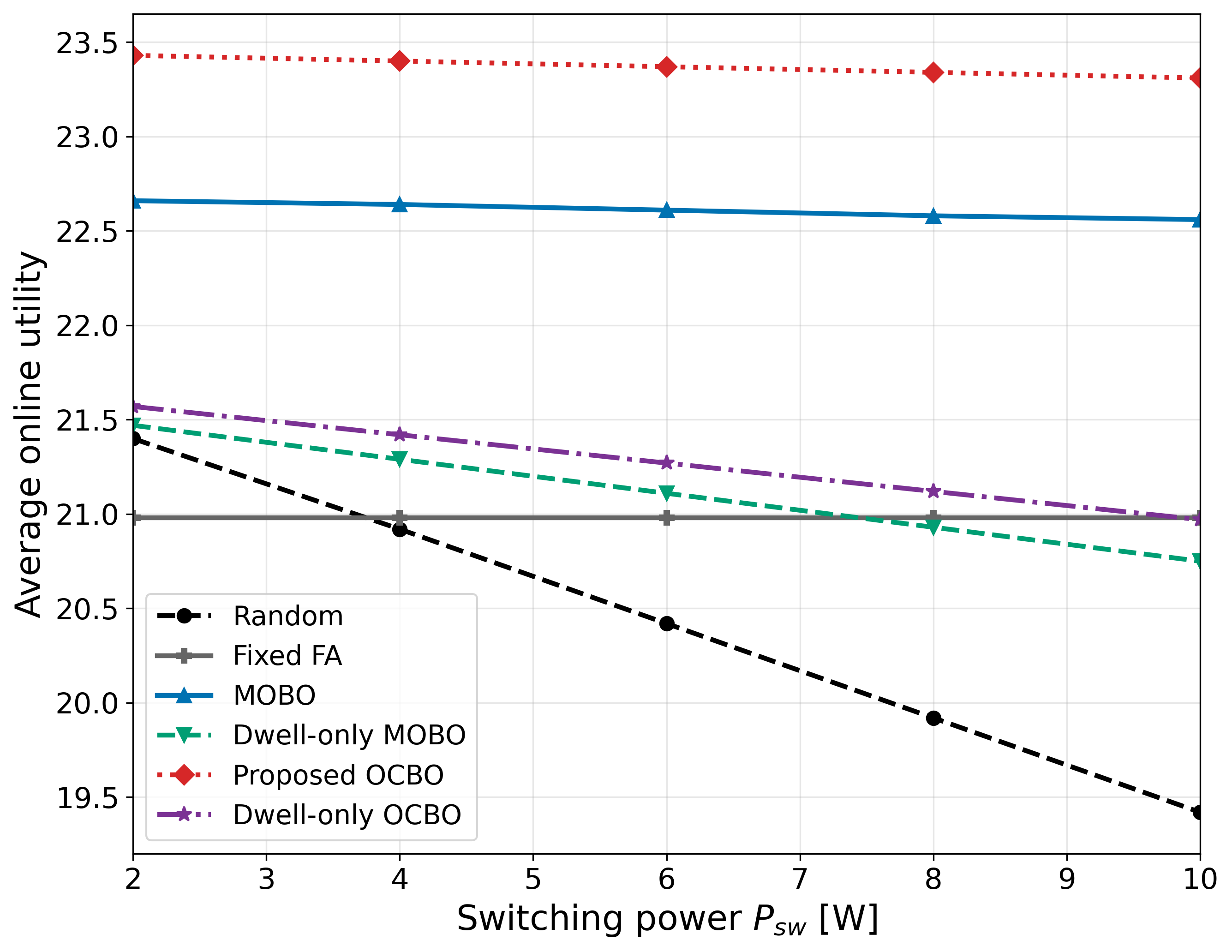}
    \caption{Average online cost-aware utility versus switching power.}
    \label{fig:power_sweep}
\end{figure}

\begin{figure}[t]
    \centering
    \includegraphics[width=0.75\linewidth]{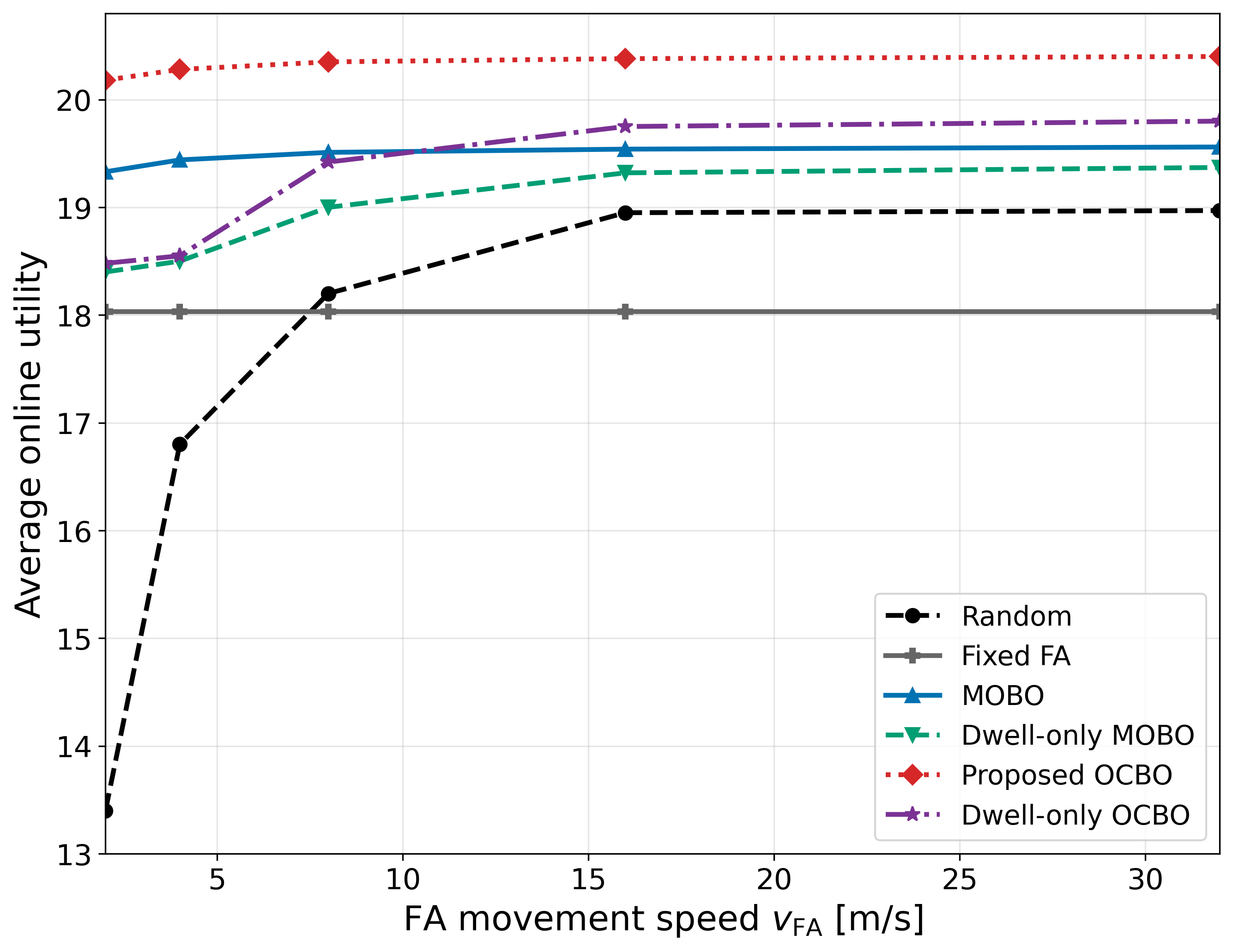}
    \caption{Average online cost-aware utility versus FA movement speed.}
    \label{fig:speed_sweep}
\end{figure}

\section{Simulations}
In this section, we evaluate the performance of the proposed cost-aware FAS-ISAC switching algorithm through numerical simulations. The simulation scenario is considered within a 3D space of $200 \times 200 \times 80$~m$^3$, where $B=3$ fixed BSs are deployed. The system serves $K=3$ dynamic communication users moving at speeds up to $10$~m/s, while simultaneously monitoring $L=2$ dynamic point targets (UAVs) via a fully cooperative networked sensing architecture. These UAV targets move at velocities of approximately $15$~m/s, necessitating real-time configuration adaptation. Each BS is equipped with a fluid antenna surface featuring a $16 \times 16$ candidate port grid (256 ports in total) with a port spacing of $0.25\lambda$. In each time slot $t$, each BS activates $M_b=2$ ports for communication transmission and $N_b=2$ ports for sensing reception. The system operates at a carrier frequency of $28$~GHz using an OFDM waveform with a $128$-point FFT and $|\mathcal{N}|=32$ active subcarriers. The transmit power is set to $40$~dBm. The channel model incorporates Rician fading with a $K$-factor of $10$~dB and a pathloss model with an exponent of $2.5$. The online decision-making process spans $T=200$ slots with a slot interval $T_{\mathrm{s}}=0.02$~s, totaling a duration of $4$~s.

We compare the proposed cost-aware FAS switching architecture with four baselines. \textit{Random} selects the FAS configuration uniformly from the candidate set. \textit{Fixed} keeps the initial port configuration unchanged during all slots. \textit{MOBO} adopts standard multi-objective Bayesian optimization without conformal calibration. \textit{Dwell-only} uses multi-objective search strategy but evaluates only the dwell-stage objectives, ignoring the switching-stage contribution. These baselines are used to verify the benefits of cost-aware performance modeling and conformal uncertainty calibration.

Fig. \ref{fig:running_utility} illustrates the running average online cost-aware utility across $200$ slots. It can be observed that the proposed OCBO method significantly outperforms all other baselines, reaching a superior long-term utility of approximately $23.5$. While MOBO initially tracks closely with the proposed method, its performance gap widens as the environment evolves, primarily because standard GP surrogates struggle with miscalibration under the dynamic mismatches induced by user mobility. In contrast, the conformal calibration in OCBO robustly adjusts uncertainty estimates, preventing the optimizer from making overly aggressive or suboptimal switching decisions. The \textit{Dwell-only} baseline, which ignores switching-stage contributions, exhibit a noticeable performance degradation. This confirms that modeling the unstable yet usable signals during antenna movement is crucial for maximizing long-term reward. 

Fig.~\ref{fig:power_sweep} shows the average online cost-aware utility under different switching power levels. As $P_{\mathrm{sw}}$ increases, the movement energy penalty becomes more significant, leading to a gradual utility reduction for switching-based schemes. The proposed OCBO consistently achieves the highest utility across all tested power levels, indicating that the cost-aware acquisition function effectively suppresses low-benefit switching when the mechanical cost becomes large. In contrast, \textit{Random} suffers a sharp degradation because it frequently triggers unnecessary port movements without considering the energy overhead. The \textit{Dwell-only} variants also exhibit lower performance since their decisions are made without explicitly accounting for the usable switching-stage performance, resulting in a suboptimal performance-cost tradeoff.

Fig.~\ref{fig:speed_sweep} evaluates the impact of the FA movement speed $v_{\mathrm{FA}}$. A higher movement speed shortens the switching duration and reduces the switching overhead, thereby improving the average utility of most adaptive schemes. The proposed OCBO maintains the best performance over the entire speed range, showing its robustness to different hardware movement capabilities. Compared with dwell-only MOBO and dwell-only OCBO, the gain of the proposed method confirms the benefit of incorporating switching-stage communication and sensing objectives into the slot-level evaluation. The \textit{Fixed} baseline remains almost unchanged because no port movement is performed, while the \textit{Random} baseline improves with speed but remains inferior due to its lack of informed switching decisions.

\section{Conclusions}
This paper investigated a smart switching mechanism for FAS to achieve improved ISAC performance. A cost-aware framework was developed that incorporates both switching-stage and dwell-stage ISAC objectives. The online conformal Bayesian optimization was adopted to learn the unknown physical performance while calibrating surrogate uncertainty under environmental drift. An acquisition function was further tailored to balance the calibrated utility improvement against deterministic movement energy. Simulation results demonstrate that the proposed switching scheme achieves a better long-term performance–cost trade-off than conventional cost-agnostic schemes, by exploiting usable switching-stage objectives and reducing unnecessary port movements. Future work will address complex dynamic environments with abrupt distribution shifts, along with extended-target sensing and tighter communication–sensing integration for more consistent ISAC performance.

\bibliographystyle{IEEEtran}
\bibliography{refs_conference}

\end{document}